\documentclass[lettersize,hidelinks,conference]{IEEEtran}

\usepackage{cite}
\usepackage{amsmath,amssymb,amsfonts}
\usepackage{graphicx}
\usepackage{textcomp}
\usepackage[dvipsnames]{xcolor}
\usepackage[acronym,shortcuts]{glossaries}
\usepackage{bm}
\usepackage{caption}
\usepackage{subcaption}
\usepackage{relsize}
\usepackage{soul}
\usepackage{algorithm, algpseudocode}
\usepackage{algcompatible}
\usepackage{outlines}
\usepackage{orcidlink}
\usepackage{lipsum,comment}
\usepackage{mathtools}
\usepackage{tabularx,flushend}
\usepackage{amsthm}
\newtheorem{remark}{Remark}

\theoremstyle{definition}

\newtheorem{proposition}{Proposition}

\def\BibTeX{{\rm B\kern-.05em{\sc i\kern-.025em b}\kern-.08em
T\kern-.1667em\lower.7ex\hbox{E}\kern-.125emX}}

\newcommand{\herm}[0]{^{\mathsf{H}}}

\newcommand{\Real}[1]{\Re\{{#1}\}}
\newcommand{\Imag}[1]{\Im\{{#1}\}}

\newacronym{RPE}{RPE}{radar parameter estimation}
\newacronym{OTFS}{OTFS}{orthogonal time frequency space}
\newacronym{AFDM}{AFDM}{affine frequency division multiplexing}
\newacronym{MIMO}{MIMO}{multiple-input multiple-output}
\newacronym{SISO}{SISO}{single-input single-output}
\newacronym{ISAC}{ISAC}{integrated sensing and communications}
\newacronym{3D}{3D}{three-dimensional}
\newacronym{2D}{2D}{two-dimensional}
\newacronym{1D}{1D}{one-dimensional}
\newacronym{RX}{RX}{receiver}
\newacronym{TX}{TX}{transmitter}
\newacronym{BF}{BF}{beamforming}
\newacronym{mmWave}{mmWave}{millimeter-wave}
\newacronym{SotA}{SotA}{state-of-the-art}
\newacronym{ULA}{ULA}{uniform linear array}
\newacronym{QAM}{QAM}{quadrature amplitude modulation}
\newacronym{ISFFT}{ISFFT}{inverse symplectic finite Fourier transform}
\newacronym{SFFT}{SFFT}{symplectic finite Fourier transform}
\newacronym{AWGN}{AWGN}{additive white Gaussian noise}
\newacronym{OFDM}{OFDM}{orthogonal frequency division multiplexing}
\newacronym{OCDM}{OCDM}{orthogonal chirp division multiplexing}
\newacronym{BS}{BS}{base station}
\newacronym{UE}{UE}{user equipment}
\newacronym{DFT}{DFT}{discrete Fourier transform}
\newacronym{IDFT}{IDFT}{inverse discrete Fourier transform}
\newacronym{TD}{TD}{time-domain}
\newacronym{wlg}{wlg}{without loss of generality}
\newacronym{CP}{CP}{cyclic prefix}
\newacronym{DAFT}{DAFT}{discrete affine Fourier transform}
\newacronym{IDAFT}{IDAFT}{inverse discrete affine Fourier transform}
\newacronym{CPP}{CPP}{\textit{chirp-periodic} prefix}
\newacronym{IDZT}{IDZT}{inverse discrete Zak transform}
\newacronym{DZT}{DZT}{discrete Zak transform}
\newacronym{ICI}{ICI}{inter-carrier interference}
\newacronym{BER}{BER}{bit error rate}
\newacronym{DoF}{DoF}{degrees-of-freedom}
\newacronym{FD}{FD}{full-duplex}
\newacronym{SIMO}{SIMO}{single-input multiple-output}
\newacronym{MISO}{MISO}{multiple-input single-output}
\newacronym{AoD}{AoD}{angle-of-departure}
\newacronym{AoA}{AoA}{angle-of-arrival}
\newacronym{RF}{RF}{radio frequency}
\newacronym{SIM}{SIM}{stacked intelligent metasurfaces}
\newacronym{FIM}{FIM}{flexible intelligent metasurface}
\newacronym{FPGA}{FPGA}{field programmable gate array}
\newacronym{UPA}{UPA}{uniform planar array}
\newacronym{CC}{CC}{communication-centric}
\newacronym{I/O}{I/O}{input-output}
\newacronym{iid}{i.i.d.}{independent and identically distributed}
\newacronym{IoT}{IoT}{internet of things}
\newacronym{V2X}{V2X}{vehicle-to-everything}
\newacronym{NTN}{NTN}{non-terrestrial network}
\newacronym{LEO}{LEO}{low-earth orbit}
\newacronym{THz}{THz}{terahertz}
\newacronym{EM}{EM}{electromagnetic}
\newacronym{STAR-RIS}{STAR-RIS}{simultaneously transmitting and reflecting reconfigurable intelligent surface}
\newacronym{DoA}{DoA}{direction-of-arrival}
\newacronym{DD}{DD}{doubly-dispersive}
\newacronym{ODDM}{ODDM}{orthogonal delay-Doppler division multiplexing}
\newacronym{LoS}{LoS}{line-of-sight}
\newacronym{NLoS}{NLoS}{non-line-of-sight}
\newacronym{6G}{6G}{sixth generation}
\newacronym{MPDD}{MPDD}{metasurfaces-parameterized DD}
\newacronym{GaBP}{GaBP}{Gaussian Belief Propagation}
\newacronym{MSE}{MSE}{mean-squared-error}
\newacronym{sIC}{soft IC}{soft interference cancellation}
\newacronym{soft RG}{soft RG}{soft replica generation}
\newacronym{BG}{BG}{belief generation}
\newacronym{SGA}{SGA}{scalar Gaussian approximation}
\newacronym{CLT}{CLT}{central limit theorem}
\newacronym{PDF}{PDF}{probability density function}
\newacronym{QPSK}{QPSK}{quadrature phase-shift keying}
\newacronym{LMMSE}{LMMSE}{linear minimum mean square error}
\newacronym{SNR}{SNR}{signal-to-noise ratio}
\newacronym{QoS}{QoS}{quality of service}
\newacronym{CoV}{CoV}{calculus of variations}
\newacronym{CAPA}{CAPA}{continuous aperture array}
\newacronym{FCAPA}{FCAPA}{flexible continuous aperture array}
\newacronym{GL}{GL}{Gauss-Legendre}
\newacronym{DDC MIMO}{DDC MIMO}{DD continuous MIMO}
\newacronym{B5G}{B5G}{beyond fifth generation}
\newacronym{VR}{VR}{virtual reality}
\newacronym{XR}{XR}{extended reality}
\newacronym{ITN}{ITN}{intelligent traffic networks}
\newacronym{SAGIN}{SAGIN}{space-air-ground integrated network}
\newacronym{UAV}{UAV}{unmanned aerial vehicle}
\newacronym{MUSIC}{MUSIC}{Multiple Signal Classification}
\newacronym{ICC}{ICC}{integrated communication and computing}
\newacronym{SINR}{SINR}{signal-to-interference-plus-noise ratio}
\newacronym{WSR}{WSR}{weighted sum rate}
\newacronym{ARPU}{ARPU}{average rate per user}
\newacronym{BCD}{BCD}{block coordinate descent}
\newacronym{PDE}{PDE}{partial differential equation}
\newacronym{EL}{EL}{Euler-Lagrange}
\newacronym{TCA}{TCA}{tightly coupled array}
\newacronym{ELAA}{ELAA}{extremely large-aperture arrays}
\newacronym{LIS}{LIS}{large intelligent surface}
\newacronym{CSI}{CSI}{channel state information}
\newacronym{RIS}{RIS}{reconfigurable intelligent surface}
\newacronym{SEP}{SEP}{symbol error probability}
\newacronym{ADC}{ADC}{analog-to-digital converter}
\newacronym{MRC}{MRC}{maximum ratio combining}
\newacronym{MMA}{MMA}{moment matching approximation}

\begin{document}

\title{1-bit Quantized Continuous Aperture Arrays}

\author{\IEEEauthorblockN{Kuranage Roche Rayan Ranasinghe$^*$, Getuar Rexhepi$^*$, Zhaolin Wang$^\dag$ and Giuseppe Thadeu Freitas de Abreu$^*$}
\IEEEauthorblockA{$^*$\textit{School of Computer Science and Engineering, Constructor University, 28759 Bremen, Germany} \\
$^\dag$\textit{Department of Electrical and Computer Engineering, The University of Hong Kong, Hong Kong} \\
Emails: [kranasinghe, grexhepi, gabreu]@constructor.university, zhaolin.wang@hku.hk}
\vspace{-4ex}
}



\maketitle

\begin{abstract}
\Acp{CAPA} have emerged as a promising physical-layer paradigm for \ac{6G} systems, offering spatial degrees of freedom beyond those of conventional discrete antenna arrays. 
This paper investigates the interaction between the \ac{CAPA} receive architecture and low-cost 1-bit \acp{ADC}, which impose a severe nonlinear distortion penalty in conventional discrete systems. 
For Rayleigh fading, we derive a \ac{MMA}-based closed-form \ac{SEP} approximation based on Gamma moment-matching of the spatial eigenvalue distribution, and show that \acp{CAPA} incur a diversity-order penalty governed by Jensen's inequality on the mode eigenvalues. 
For \ac{LoS} propagation, we prove that \ac{CAPA} achieves \emph{exactly} the unquantized \ac{AWGN} performance bound under perfect spatial and phase alignment, completely eliminating the 1-bit penalty that forces discrete systems to double their antenna count. 
Monte Carlo simulations under Rayleigh, Rician, and \ac{LoS} conditions validate all analytical results.
\end{abstract}

\begin{IEEEkeywords}
    CAPA, H-MIMO, 1-bit Quantization, SEP, MRC.
\end{IEEEkeywords}

\glsresetall


\section{Introduction}

\IEEEPARstart{M}{ulti-antenna} technology has been central to every generation of wireless systems, from spatial multiplexing in fourth-generation long-term evolution to the massive array deployments of fifth-generation new radio~\cite{larsson2014massive, 10144733}. 
Looking toward \ac{6G}, \acp{CAPA} have been identified as a transformative architecture: by treating the receive surface as a continuous electromagnetic structure rather than a collection of spatially isolated elements, \acp{CAPA} exploit the full spatial bandwidth of the propagation channel and approach the information-theoretic limits imposed by the electromagnetic degrees of freedom (DoF)~\cite{dang2020should,HanTWC2023,OuyangTWC2025}. 
Unlike discrete arrays, a \ac{CAPA} integrates the incident field over the entire physical aperture before any digital processing occurs, a property that motivates the present investigation.
 
A parallel hardware trend in massive \ac{MIMO} has driven interest in receivers equipped with low-resolution \acp{ADC}, and particularly 1-bit \acp{ADC}, which reduce power consumption and circuit complexity by several orders of magnitude relative to their high-resolution counterparts~\cite{JacobssonTWC2017}. 
However, 1-bit quantization imposes a severe nonlinear distortion that fundamentally limits the achievable diversity order. 
Recent work has rigorously characterized this penalty for discrete \ac{SIMO} systems under \ac{QPSK} signaling, establishing that achieving the same high-\ac{SNR} \ac{SEP} scaling as an unquantized receiver requires approximately doubling the number of receive antennas~\cite{ravinath2026sepjournal,MoTSP2015}. 
This ``twice the antennas'' rule represents a significant hardware cost and motivates the search for receiver architectures that can circumvent it.
 
Whether the continuous spatial integration inherent to a \ac{CAPA} can mitigate the 1-bit quantization penalty remains an open question. 
The answer is non-trivial: the analog beamforming stage of a \ac{CAPA} operates \emph{before} the \acp{ADC}, meaning the field is spatially combined in the continuous domain prior to quantization. 
This pre-quantization integration fundamentally differs from the per-antenna quantization of discrete arrays, and its interaction with the nonlinear 1-bit distortion has not previously been studied.
 
This paper makes the following contributions. \emph{(i)} Under Rayleigh fading, we derive a \ac{MMA}-based \ac{SEP} approximation that quantifies the eigenvalue-dispersion penalty of the \ac{CAPA} architecture relative to an idealized \ac{iid} \ac{SIMO} baseline via a Gamma moment-matching argument. \emph{(ii)} Under pure \ac{LoS} propagation, we prove analytically (Proposition~1) that a 1-bit \ac{CAPA} with perfect spatial and phase alignment achieves exactly the unquantized \ac{AWGN} \ac{SEP} bound, with the $2/\pi$ ADC penalty vanishing identically. \emph{(iii)} We provide Monte Carlo validation across Rayleigh, Rician, and pure \ac{LoS} regimes, including a spatial misalignment experiment that characterizes graceful degradation from the ideal \ac{LoS} bound.

\section{System Model}
\label{sec:Sys_Model_single_user}

This section establishes the system model for a single-user \ac{CAPA}-based uplink scenario. 

\subsection{SIMO System Model}

Consider a traditional received signal vector $\mathbf{y} \in \mathbb{C}^{N \times 1}$ at an uplink \ac{SIMO} system given by \cite{ravinath2026sepjournal}
\begin{equation}\label{eq:simo_model}
    \mathbf{y} = \sqrt{P} \mathbf{h} s + \mathbf{n},
\end{equation}
where $P$ denotes the average transmit power, $\mathbf{h} \in \mathbb{C}^{N \times 1}$ defines the discrete spatial channel response vector (assuming a rich scattering environment with \ac{iid} Rayleigh fading, the channel coefficients are modeled as circularly symmetric complex Gaussian random variables $\mathbf{h} \sim \mathcal{CN}(\mathbf{0}, \mathbf{I}_N)$), $s \in \mathbb{C}$ is the transmitted data symbol drawn from a uniform discrete constellation (e.g., \ac{QPSK} or $M$-PSK) statistically normalized to possess unit average energy such that $\mathbb{E}[|s|^2] = 1$, and $\mathbf{n} \sim \mathcal{CN}(\mathbf{0}, \sigma_n^2 \mathbf{I}_N)$ is the additive white Gaussian noise vector.

Let the \ac{ADC} quantizer operation be defined as \cite{MoTSP2015}
\vspace{-1ex}
\begin{equation}
\label{eq:quantizer_def}
\vspace{-1ex}
    \mathrm{Q}(\cdot) \triangleq \frac{1}{\sqrt{2}} \Big( \text{sgn}\big( \Real{\cdot} \big) + j \text{sgn}\big( \Imag{\cdot} \big) \Big).    
\end{equation}

Then, the quantized vector $\tilde{\mathbf{y}}$ at the output of the \acp{ADC} is given by
\vspace{-1ex}
\begin{equation}
    \tilde{\mathbf{y}} = \mathrm{Q}(\mathbf{y}) \in \mathcal{S}.
\end{equation}

Using \ac{MRC}-based \ac{QPSK} symbol detection with fully digital beamforming\footnote{In principle, one can also incorporate analog beamforming to the above procedure.
However, the fully digital framework yields a best-case bound, and is therefore used persistently throughout.} yields
\vspace{-1ex}
\begin{equation}
    \hat{s}_\mathrm{MRC} = \mathrm{Q}(\mathbf{h}\herm \tilde{\mathbf{y}}) \in \mathcal{S}.
\end{equation}

It is important to note that the exact analytical \ac{SEP} bounds derived in \cite{ravinath2026sepjournal} specifically characterize the \ac{SIMO} architecture under Rayleigh fading with \ac{QPSK} signaling. 
Consequently, our theoretical \ac{SIMO} baseline comparisons are evaluated under these strict conditions. 
The derivation of closed-form 1-bit bounds for generalized continuous Rician environments remains an open problem and is reserved for future work.

\subsection{CAPA System Model}

According to \cite{OuyangTWC2025,ZhaoTWC2025}, the receive signal in a single-user uplink \ac{CAPA} system at a point $\mathbf{r}$ within the aperture $\mathcal{R}$ spanning an area $|\mathcal{R}| \triangleq A = L_x L_z$ in the $xz$-plane can be expressed as
\begin{equation}
\label{eq:single_user_model}
    y(\mathbf{r}) = \sqrt{P} g(\mathbf{r}) s + n(\mathbf{r}), \quad \mathbf{r} \in \mathcal{R}, 
\end{equation}
where $P$ denotes the average transmit power, $g(\mathbf{r})$ is the continuous spatial channel response, $s \in \mathcal{S}$ is the \ac{QPSK} symbol with $\mathbb{E}[|s|^2]=1$, and $\mathbb{E}[n(\mathbf{r}) n^*(\mathbf{r'})] = \sigma^2_n \delta(\mathbf{r}- \mathbf{r'})$. 

\begin{remark}[Channel Energy Normalization]
\label{Remark:chan_energy_norm}
    To ensure a fair mathematical comparison against the discrete \ac{SIMO} baseline -- where each antenna is subject to unit average channel gain -- the continuous spatial channel response $g(\mathbf{r})$ is strictly normalized. Specifically, the field is scaled such that the expected electromagnetic energy captured by the aperture across the $M$ active spatial degrees of freedom satisfies $\sum_{m=1}^{M} \mathbb{E}\big[|h_m|^2\big] = M$. For a general Rician channel, this normalization imposes
    \begin{equation}
        \frac{K}{K+1}\sum_{m=1}^{M} |h_{m,\mathrm{LoS}}|^2 + \frac{1}{K+1}\sum_{m=1}^{M} \mathbb{E}\big[|h_{m,\mathrm{NLoS}}|^2\big] = M.
    \end{equation}
    
    Consequently, in the limiting case of a pure \ac{LoS} environment ($K \to \infty$) under perfect spatial alignment, the \ac{NLoS} components vanish, and this normalization guarantees that the entirety of the $M$-unit energy budget collapses into the dominant deterministic mode, yielding exactly $|h_{1,\mathrm{LoS}}|^2 = M$.
\end{remark}

\subsubsection{Channel Model}

The generalized Rician channel response is defined as
\begin{equation}
    g(\mathbf{r}) = \sqrt{\frac{K}{K+1}} g_{\mathrm{LoS}}(\mathbf{r}) + \sqrt{\frac{1}{K+1}} g_{\mathrm{NLoS}}(\mathbf{r}),
\end{equation}
where $K \ge 0$ is the Rician factor. 

The deterministic free-space propagation $g_{\mathrm{LoS}}(\mathbf{r})$ from a given transmit point $\mathbf{p}$ is modeled by the scalar Green's function given as
\begin{equation}
    g_{\mathrm{LoS}}(\mathbf{r}) \triangleq \frac{1}{4\pi \| \mathbf{r} - \mathbf{p} \|} e^{-j \frac{2\pi}{\lambda} \| \mathbf{r} - \mathbf{p} \|}.
\end{equation}

Because the continuous aperture $\mathcal{R}$ has finite physical dimensions, projecting $g_{\mathrm{LoS}}(\mathbf{r})$ onto a fixed orthogonal spatial Fourier basis generally results in spectral leakage. 
Unless the user's angle of arrival precisely aligns with a spatial frequency bin, the \ac{LoS} energy forms a deterministic vector $\mathbf{h}_{\mathrm{LoS}} \in \mathbb{C}^{M \times 1}$ characterized by a spatial $\mathrm{sinc}$ pattern across the $M$ active modes \cite{9906802}. 
However, under optimal analog beamforming, the continuous combining functions can be perfectly matched to the user's angle of arrival. 
In this ideal spatially aligned scenario, spectral leakage vanishes, confining the aperture energy to a single dominant spatial mode.

Following electromagnetic scattering theory for a single-antenna transmitter, the scattering environment maps the spherical wave emitted from the point source into a superposition of plane waves arriving at the receive aperture. The \ac{NLoS} spatial channel $g_{\mathrm{NLoS}}(\mathbf{r})$ is represented as a continuous spatial Fourier integral over the arriving wave vectors $\mathbf{k}$, given by
\begin{equation}
    g_{\mathrm{NLoS}}(\mathbf{r}) = \frac{1}{2\pi} \int_{\mathcal{D}_w(\mathbf{k})} \!\!\!\!\! H_a(\mathbf{k}) e^{-j \mathbf{k}^T \mathbf{r}} dk_x dk_z.
\end{equation}

Here, $H_a(\mathbf{k})$ is the angular wavenumber-domain response, stochastically modeled as
\begin{equation}
    H_a(\mathbf{k}) = \frac{A_a(\mathbf{k}) W_a(\mathbf{k})}{\sqrt{\gamma(k_x, k_z)}},
\end{equation}
where $A_a(\mathbf{k})$ is a deterministic spectral factor mapping the physical cluster geometries. The term $\gamma(k_x, k_z) = \sqrt{(2\pi/\lambda)^2 - k_x^2 - k_z^2}$ is the longitudinal wavenumber component, appearing as a geometric Jacobian factor in the transformation from angular solid-angle coordinates to transverse wavenumber coordinates. 

Because the receiver processes only a truncated subspace of $M$ active modes out of the total $D \approx |\mathcal{R}|/(\lambda/2)^2$ available electromagnetic degrees of freedom ($M < D$), a portion of the total incident field energy inherently falls outside the spatial bandwidth of the combiner.
Therefore, as a modeling convention to satisfy the strict energy budget established in Remark \ref{Remark:chan_energy_norm}, the spectral factor $A_a(\mathbf{k})$ is structurally scaled such that the expected energy captured specifically by the $M$ active combining functions satisfies $\sum_{m=1}^M \mathbb{E}[|h_{m,\mathrm{NLoS}}|^2] = M$. 
The term $W_a(\mathbf{k}) \sim \mathcal{CN}(0,1)$ is a complex-Gaussian random field satisfying the uncorrelated scattering condition $\mathbb{E}[W_a(\mathbf{k}) W_a^*(\mathbf{k}')] = \delta(\mathbf{k} - \mathbf{k}')$.

\subsubsection{Analog Beamforming and Detection}

The continuous field is first projected onto the $M$ orthonormal modes, followed by independent scalar 1-bit quantization per mode.
The receiver employs $M$ orthonormal analog beamformers $\{w_m(\mathbf{r}) \in \mathbb{C}\}^M_{m=1}$ to match the $M$ orthogonal spatial modes, constrained by the aforementioned electromagnetic degrees of freedom ($M < D$).
The projected received scalars can then be expressed as
\vspace{-1ex}
\begin{equation}
    y_m = \int_\mathcal{R} w_m^*(\mathbf{r}) y(\mathbf{r}) \mathrm{d} \mathbf{r} = \sqrt{P} h_m s + \eta_m,
\end{equation}
where $\eta_m \sim \mathcal{CN}(0,\sigma^2_n)$, with the beamformers satisfying $\int w_m^*(\mathbf{r}) w_n(\mathbf{r}) d\mathbf{r} = \delta_{mn}$.

The \ac{ADC} output vector is $\tilde{\mathbf{y}} = \mathrm{Q}(\mathbf{y}) \in \mathcal{S}^M$, and the final symbol estimate is obtained via digital \ac{MRC} combining as $\hat{s} = \mathrm{Q}(\mathbf{h}\herm \tilde{\mathbf{y}})$.

\begin{remark}[Spatial Decorrelation and Mode Variance]
    While the uncorrelated scattering model holds in the angular wavenumber domain, the finite continuous aperture $\mathcal{R}$ imposes a fundamental spatial correlation on the received \ac{NLoS} field. Projecting this correlated field onto the orthonormal analog basis yields decorrelated, but \emph{non-identically distributed}, mode gains $h_{m,\mathrm{NLoS}}$ \cite{PizzoJSAC2020}. Consequently, the effective discrete \ac{NLoS} channel vector follows $\mathbf{h}_{\mathrm{NLoS}} \sim \mathcal{CN}(\mathbf{0}, \mathbf{\Lambda})$, where $\mathbf{\Lambda} = \mathrm{diag}(\lambda_1, \dots, \lambda_M)$ contains the unequal eigenvalues of the spatial correlation operator. This diverges fundamentally from the \ac{iid} assumptions of conventional discrete \ac{SIMO} arrays.
\end{remark}

\section{SEP Analysis}
\label{sec:SEP_Analysis}

\subsection{1-bit Quantized SIMO Baseline}

For a conventional uplink \ac{SIMO} system employing $N$ \ac{iid} receive antennas with \ac{MRC} combining, the exact average \ac{SEP} under \ac{QPSK} signaling and Rayleigh fading is rigorously characterized in \cite{ravinath2026sepjournal} as
\begin{equation}
    P_{\mathrm{SE}}^{\mathrm{MRC}} = 2\,\mathbb{E}\!\left\{ Q\!\left( \sqrt{\frac{\rho}{N}U} \right) \right\} - \left( \mathbb{E}\!\left\{ Q\!\left( \sqrt{\frac{\rho}{N}U} \right) \right\} \right)^{2},
\end{equation}
where $Q(\cdot)$ is the Gaussian tail probability function, $\rho \triangleq P/\sigma_n^2$ represents the average received \ac{SNR} per branch, and $U = (\sum_{i=1}^{N} Z_i)^2$, with $\{Z_i\}_{i=1}^{N}$ being \ac{iid} standard half-normal random variables. 
This establishes that achieving unquantized \ac{SEP} scaling requires approximately twice the number of receive antennas.

\subsection{1-bit Quantized CAPA Analysis}

Because the effective \ac{CAPA} channel $\mathbf{h}$ possesses unequal mode variances $\mathbf{\Lambda}$, the continuous architecture is not mathematically isomorphic to the discrete \ac{SIMO} system. 

\subsubsection{MMA-based Bounding for Rayleigh Channels}

It is crucial to note that under pure Rayleigh fading, the \ac{SEP} of the \ac{CAPA} architecture will inherently deviate from the exact \ac{SIMO} bounds. 
Because the continuous spatial correlation restricts the available degrees of freedom, the \ac{iid} \ac{SIMO} bound serves as a strict theoretical upper limit on spatial diversity against which the practical \ac{CAPA} eigenvalue penalty can be quantified.

\textbf{Moment Matching Approximation:}
Under Rayleigh fading, the average \ac{SEP} of the 1-bit quantized \ac{CAPA} architecture can be approximated by modeling the decision statistic as a generalized Gamma variable, yielding
\begin{equation}
\label{eq:CAPA_exact_SEP}
    P_{\mathrm{SE}}^{\mathrm{CAPA}} \approx 2 \mathcal{I}_{\mathbf{\Lambda}} - \mathcal{I}_{\mathbf{\Lambda}}^{2},
\end{equation}
where $\mathcal{I}_{\mathbf{\Lambda}}$ is the expected Gaussian tail probability evaluated over a Gamma distribution defined as
\begin{equation}
\label{eq:CAPA_Integral}
    \mathcal{I}_{\mathbf{\Lambda}} = \int_{0}^{\infty} Q\!\left( \sqrt{\frac{\rho}{M}} w \right) \frac{w^{k-1} e^{-w/\theta}}{\Gamma(k) \theta^k} \mathrm{d}w.
\end{equation}

The shape parameter $k$ and scale parameter $\theta$ are determined by the \ac{NLoS} spatial eigenvalues $\mathbf{\Lambda} = \mathrm{diag}(\lambda_1, \dots, \lambda_M)$ as
\begin{align}
    \label{eq:gamma_k}
    k &= \frac{\frac{2}{\pi} \left( \sum_{m=1}^{M} \sqrt{\lambda_m} \right)^2}{\left( 1 - \frac{2}{\pi} \right) \sum_{m=1}^{M} \lambda_m}, \quad \theta = \frac{\left( 1 - \frac{2}{\pi} \right) \sum_{m=1}^{M} \lambda_m}{\sqrt{\frac{2}{\pi}} \sum_{m=1}^{M} \sqrt{\lambda_m}}.
\end{align}

\emph{Justification:} For \ac{QPSK} signaling ($s = s_R + j s_I$), the unquantized signal at the $m$-th mode is $y_m = \sqrt{P} (\Re\{h_m\}s_R - \Im\{h_m\}s_I) + j\sqrt{P}(\Re\{h_m\}s_I + \Im\{h_m\}s_R) + \eta_m$. 
After 1-bit quantization, the \ac{MRC} decision statistic for the real dimension is given by $z_R = \Re\{\mathbf{h}\herm \mathrm{Q}(\mathbf{y})\}$, which expands to
\begin{equation}
    z_R = \frac{1}{\sqrt{2}} \sum_{m=1}^M \Big[ \Re\{h_m\} \mathrm{sgn}(\Re\{y_m\}) + \Im\{h_m\} \mathrm{sgn}(\Im\{y_m\}) \Big].
\end{equation}

The 1-bit \ac{MRC} combiner tightly cross-couples the real and imaginary channel components, precluding an exact closed-form probability density function. 
Rather than analytically untangling this nonlinear phase coupling, we conjecture, by analogy with the \ac{iid} \ac{SIMO} derivation in \cite{ravinath2026sepjournal}, that the dominant contribution to the error probability can be captured by a generalized effective variable. 
Specifically, we assume the combining variable generalizes the \ac{SIMO} statistic to the case of unequal branch variances, taking the form $W_{\mathbf{\Lambda}} \triangleq \sum_{m=1}^{M} \sqrt{\lambda_m} Z_m$, where $\{Z_m\}$ are \ac{iid} standard half-normal variables.

The effective variable $W_{\mathbf{\Lambda}}$ is structurally normalized to match the convention in \cite{ravinath2026sepjournal}. 
This ensures that the corresponding $Q$-function argument recovers exactly $\sqrt{\rho/N} \sum_i Z_i$ in the \ac{iid} limit ($M=N$, $\lambda_m=1$). 

Since $W_{\mathbf{\Lambda}}$ isolates the fundamental diversity degradation caused by the spatial eigenvalues, we leverage a well-known moment-matching approximation \cite{AnnamalaiTCOM2000,PeppasTVT2018,MaTIT2025}. 
Matching the central moments of $W_{\mathbf{\Lambda}}$ to a Gamma distribution $W_{\mathbf{\Lambda}} \sim \Gamma(k, \theta)$ yields \eqref{eq:gamma_k}. 
Substituting this into the expectation $\mathbb{E}\{Q(\sqrt{\rho/M} W_{\mathbf{\Lambda}})\}$ produces the integral $\mathcal{I}_{\mathbf{\Lambda}}$. 
Crucially, in the fully uncorrelated limit, this approach mathematically recovers the corresponding moment-matched Gamma approximation of the \ac{iid} \ac{SIMO} combining variable, ensuring analytical consistency at the boundary.

\subsubsection{Analytical Expression for Line-of-Sight}

We now present an analytical bound for 1-bit quantized \acp{CAPA} under \ac{LoS} conditions.

\begin{proposition}
\label{prop:Exact_LoS_Bounds}
In a pure \ac{LoS} environment ($K \to \infty$), assuming uniform plane-wave incidence at broadside -- where all antenna elements experience identical deterministic channel gains and zero relative phase shifts -- the exact \ac{SEP} of the 1-bit quantized discrete \ac{SIMO} architecture is
\begin{equation}
\label{eq:SEP_SIMO_LoS}
    \mathrm{SEP}_{\mathrm{SIMO, LoS}} = 2P_{e,R} - P_{e,R}^2,
\end{equation}
where $P_{e,R} = \sum_{\ell=\lfloor\frac{N}{2}\rfloor + 1}^N \binom{N}{\ell} p_0^\ell (1-p_0)^{N-\ell} + \frac{1}{2} \mathbb{I}_{\{N \text{ is even}\}} \binom{N}{N/2} p_0^{N/2} (1-p_0)^{N/2}$, with $p_0 = Q(\sqrt{\rho})$ and $\mathbb{I}_{\{\cdot\}}$ acting as the indicator function.

Conversely, under perfect spatial and phase alignment, the exact \ac{SEP} of the 1-bit quantized continuous \ac{CAPA} architecture achieves exactly the unquantized \ac{AWGN} bound, given by
\begin{equation}
\label{eq:SEP_CAPA_LoS}
    \mathrm{SEP}_{\mathrm{CAPA, LoS}} = 2p_{e,\mathrm{CAPA}} - p_{e,\mathrm{CAPA}}^2,
\end{equation}
where $p_{e,\mathrm{CAPA}} = Q( \sqrt{M \rho} )$.
\end{proposition}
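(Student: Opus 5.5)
We treat the two architectures separately, in each case reducing the QPSK symbol error to two independent binary decisions on the real and imaginary rails. This reduction gives the common form $2p-p^2$. By the symmetry of QPSK and of the noise, we may condition on the transmitted symbol $s=(1+j)/\sqrt{2}$.

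\emph{Discrete SIMO.} Under broadside plane-wave incidence, $\mathbf{h}=\mathbf{1}_N$ (unit gain per element, zero relative phase). Then $\Re\{y_n\}=\sqrt{P/2}+\Re\{n_n\}$ with $\Re\{n_n\}\sim\mathcal{N}(0,\sigma_n^2/2)$. Each quantized real bit is therefore independently wrong with probability
\begin{equation*}
p_0=Q\Big(\sqrt{P/2}\big/\sqrt{\sigma_n^2/2}\Big)=Q(\sqrt{\rho}).
\end{equation*}
Because $\mathbf{h}=\mathbf{1}$ is real, the MRC statistic decouples: $\Re\{\mathbf{h}\herm\tilde{\mathbf{y}}\}=\frac{1}{\sqrt{2}}\sum_n \mathrm{sgn}(\Re\{y_n\})$, and the imaginary part involves only the imaginary bits. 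The real decision is thus a majority vote over $N$ i.i.d.\ Bernoulli$(p_0)$ errors. It is wrong whenever more than $N/2$ bits are flipped. For even $N$, a tie gives a zero statistic; we adopt the convention that $\mathrm{sgn}(0)$ is resolved by a fair coin flip, which contributes the $\tfrac12\binom{N}{N/2}$ term. Summing the binomial tail gives $P_{e,R}$. The imaginary rail is identical and independent because the noise real and imaginary parts are independent, so a symbol error has probability $1-(1-P_{e,R})^2=2P_{e,R}-P_{e,R}^2$.

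\emph{CAPA.} By Remark~\ref{Remark:chan_energy_norm}, perfect spatial alignment concentrates all LoS energy in the first mode, so $h_1$ satisfies $|h_1|^2=M$ and $h_m=0$ for $m\ge2$. Perfect phase alignment lets us take $h_1=\sqrt{M}$ real and positive, with the phase absorbed into $w_1$. The modes $m\ge 2$ carry only noise but receive zero MRC weight, so $\hat s=\mathrm{Q}(\sqrt{M}\,\tilde y_1)=\mathrm{Q}(y_1)$, since $\mathrm{Q}$ is invariant to positive scaling. Hence the 1-bit chain $\mathrm{Q}(\mathrm{Q}(\cdot))$ reduces to a single sign decision per rail on
\begin{equation*}
y_1=\sqrt{PM}\,s+\eta_1,
\end{equation*}
which is exactly the unquantized minimum-distance QPSK detector in AWGN. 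On the real rail, $\Re\{y_1\}=\sqrt{PM/2}+\Re\{\eta_1\}$ with variance $\sigma_n^2/2$, so the per-rail error is $Q(\sqrt{M\rho})=p_{e,\mathrm{CAPA}}$. Independence of the two rails again yields $2p_{e,\mathrm{CAPA}}-p_{e,\mathrm{CAPA}}^2$.

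The main obstacle is justifying the CAPA reduction rigorously. Three points need care: (i) the single-mode concentration, which requires the alignment and normalization of the Remark, including the requirement that $\eta_1\sim\mathcal{CN}(0,\sigma_n^2)$ remain independent of the unused modes; (ii) the observation that zero-gain modes do not perturb the MRC output; and (iii) the idempotence/scale-invariance of $\mathrm{Q}$, which is what makes the $2/\pi$ penalty vanish. The tie-breaking convention in the SIMO majority vote is a secondary point to state explicitly.
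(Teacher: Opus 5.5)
Your proposal is correct and follows the same route as the paper's proof. For SIMO you use the per-branch error $p_0=Q(\sqrt{\rho})$ with majority voting over $N$ independent Bernoulli trials; for CAPA you use single-mode concentration with $|h_1|^2=M$ and phase alignment, reducing detection to one sign decision per rail with error $Q(\sqrt{M\rho})$. You also make explicit several points the paper leaves implicit: the fair-coin tie-breaking that produces the $\tfrac12\binom{N}{N/2}$ term, the independence of the two rails that gives the $2p-p^2$ form, and the identity $\mathrm{Q}(\sqrt{M}\,\mathrm{Q}(y_1))=\mathrm{Q}(y_1)$ behind the vanishing $2/\pi$ penalty.
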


\begin{IEEEproof}
For the \ac{SIMO} architecture, under uniform broadside incidence, the channel coefficients after phase compensation become a common positive scalar $h_i = \alpha > 0$ for all $i$. 
Consistent with the unit-variance \ac{iid} Rayleigh baseline, the deterministic \ac{LoS} path ensures unit gain per antenna ($\alpha = 1$). 
Under \ac{QPSK} signaling ($|s_R| = 1/\sqrt{2}$), the single-branch error probability evaluates exactly to $p_0 = Q(\sqrt{P \alpha^2 (1/2) / (\sigma_n^2/2)}) = Q(\sqrt{\rho})$. 
The digital \ac{MRC} combiner thus reduces to an unweighted sum $z_R = \frac{1}{\sqrt{2}} \sum_{i=1}^N \mathrm{sgn}(\Re\{y_i\})$.
As is standard in classical hard-decision diversity combining, this functionally operates as a true spatial majority voter over $N$ independent Bernoulli trials. 
An error occurs if more than half of the antennas yield an incorrect bit ($\ell > N/2$), yielding the Binomial sum in \eqref{eq:SEP_SIMO_LoS}.

For the \ac{CAPA} architecture, assuming perfect spatial alignment, the analog basis projects the entire aperture energy into a single dominant mode. 
Due to the $K \to \infty$ normalization constraint, the entirety of the aperture's effective gain concentrates into this mode ($|h_1|^2 = M$). 
To realize optimal detection, the analog beamformer must apply a conjugate phase shift to align the complex phase of this dominant mode with the baseband axes prior to quantization. 
While 1-bit quantization typically incurs an asymptotic $2/\pi$ power penalty, this is eliminated under ideal phase-aligned single-mode reception.
Because the 1-bit \ac{ADC} thresholds perfectly coincide with the unrotated \ac{QPSK} decision boundaries, no spatial or amplitude information is lost. 
The digital decision relies entirely on the sign of the $M$-amplified mode, reducing the error probability directly to the unquantized Gaussian tail $Q(\sqrt{M\rho})$.
\end{IEEEproof}

\subsection{Architectural Implications of the Analytical Results}

The analytical results produced via the \ac{MMA} and Proposition \ref{prop:Exact_LoS_Bounds} mathematically formalize the operational trade-offs of a \ac{CAPA} relative to a discrete \ac{SIMO} system \cite{ravinath2026sepjournal}. 
Specifically, the preceding analysis highlights two primary architectural implications:
\begin{itemize}
    \item \textbf{The Rayleigh Eigenvalue Penalty:} Under isotropic scattering, the finite continuous aperture imposes spatial correlation, forcing the \ac{NLoS} energy into a subset of dominant eigenvalues $\mathbf{\Lambda}$. As quantified by the shape parameter $k$ in the \ac{MMA} procedure, a \ac{CAPA} of $M$ modes performs equivalently to an idealized \ac{iid} \ac{SIMO} system experiencing a reduced effective diversity order. Specifically, an \ac{iid} array of $M$ discrete antennas yields a baseline shape parameter $k_{\mathrm{iid}} = \frac{2/\pi}{1-2/\pi}M \approx 1.75M$. For \ac{CAPA}, this parameter degrades to $k < k_{\mathrm{iid}}$, meaning \ac{CAPA} requires larger physical apertures to overcome 1-bit distortion. This degradation is a direct consequence of Jensen's inequality; because the square root is a strictly concave function, any non-uniform eigenvalue distribution subject to the energy constraint $\sum_{m=1}^M \lambda_m = M$ mathematically guarantees $\sum_{m=1}^M \sqrt{\lambda_m} < M$, strictly yielding $k < k_{\mathrm{iid}}$.
    
    \item \textbf{LoS Phase-Aligned Array Gain:} Conversely, in deterministic \ac{LoS} environments, \ac{CAPA} circumvents the discrete ``twice the antennas'' scaling rule entirely. By utilizing analog beamforming to concentrate energy and align the phase of a single spatial mode prior to quantization, \ac{CAPA} perfectly aligns the signal with the \ac{ADC} thresholds, fundamentally eliminating the 1-bit nonlinear penalty and achieving optimal unquantized performance.
\end{itemize}

\section{Performance Analysis}
\label{sec:results}

All results use $N = M = 8$, $f = 2.4$~GHz, and $A = 0.25$~m$^2$ unless stated otherwise. 
The \ac{CAPA} \ac{NLoS} channel follows the isotropic sinc correlation model of~\cite{PizzoJSAC2020}, and the \ac{LoS} component is projected via the scalar Green's function at broadside. 
Each point is averaged over $10^7$ Monte Carlo trials.

\subsection{Rayleigh Fading}

Fig.~\ref{fig:1-bit_Rayleigh} compares the nominal \ac{SotA} \cite{OuyangTWC2025} ($A = 0.25$~m$^2$) and minimum ($A = 0.03125$~m$^2$) aperture cases. 
For the large aperture, only $M/D = 12.5\%$ of available modes are activated, yielding near-uniform eigenvalues ($\lambda_m \in [0.955, 1.029]$) and negligible Jensen penalty; the quantized \ac{CAPA} and \ac{SIMO} curves are therefore nearly coincident, serving as a consistency check. 
For the smaller aperture, $M$ already reaches the practical \acp{DoF} limit (i.e., \(100\%\) of the available modes), indicating that no performance degradation occurs with reduced aperture size as long as the \acp{DoF} are fully utilized.
In both cases, the $W_\Lambda$ conjecture tracks the \ac{CAPA} simulation accurately. 
The closed-form Gamma integral of via \ac{MMA} is reliable at low-to-moderate \ac{SNR} but becomes over-optimistic above approximately $2$~dB, where tail mismatch in the Gamma approximation dominates.

\subsection{Rician Fading}

Fig.~\ref{fig:1-bit_Rice8dB} shows the intermediate regime at $K = 2,8$~dB. 
No closed-form bounds exist for this case. 
At the larger $8$~dB, the 1-bit \ac{CAPA} curve sits measurably closer to the unquantized limit than under Rayleigh fading, as the strong \ac{LoS} component ($86\%$ of received power) begins concentrating energy into the dominant spatial mode and partially recovering the phase-alignment benefit of Proposition~\ref{prop:Exact_LoS_Bounds}. 
The quantization gap for the \ac{SIMO} system similarly narrows due to channel hardening. 
These observations confirm that the Rician regime interpolates smoothly between the two analytically characterized extremes.

\subsection{Pure Line-of-Sight}

The first subfigure in Fig.~\ref{fig:1-bit_LoS_A1} validates Proposition~\ref{prop:Exact_LoS_Bounds} under perfect spatial and phase alignment. 
The \ac{SIMO} Binomial bound and \ac{CAPA} \ac{AWGN} bound both match their respective simulations to within Monte Carlo precision. 

\begin{figure}[H]
    \centering
    \begin{subfigure}{0.95\columnwidth}
        \centering
        \includegraphics[width=\columnwidth]{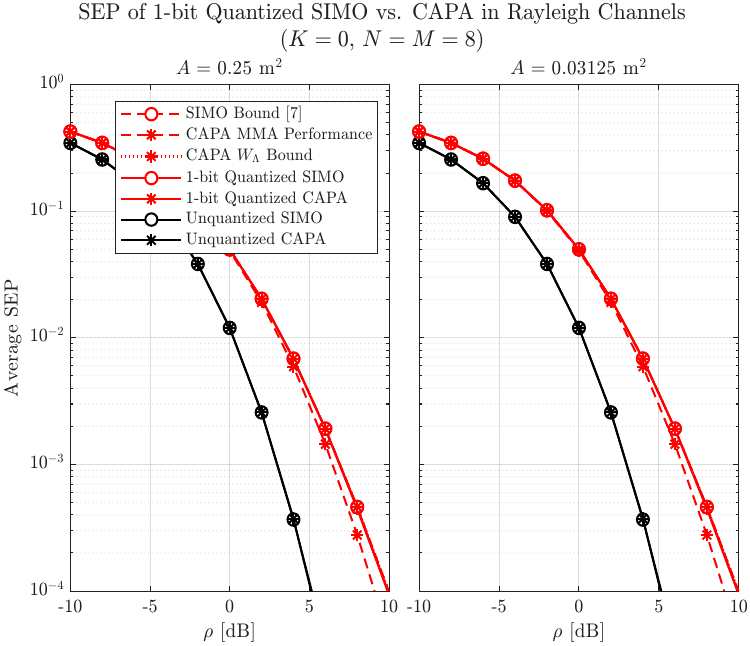}
        \vspace{-4ex}
        \caption{Rayleigh fading.}
        \label{fig:1-bit_Rayleigh}
    \end{subfigure}
    \\
    \begin{subfigure}{0.95\columnwidth}
        \centering
        \includegraphics[width=\columnwidth]{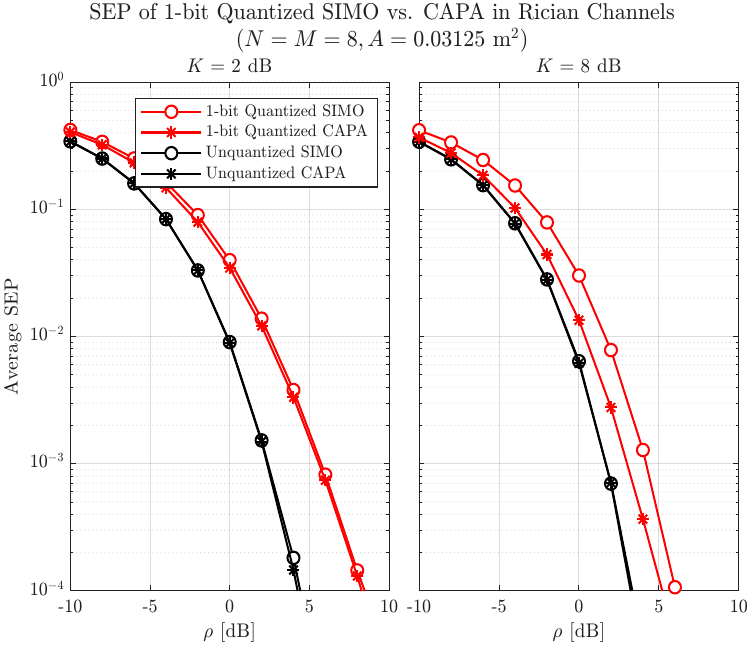}
        \vspace{-4ex}
        \caption{Rician fading.}
        \label{fig:1-bit_Rice8dB}
    \end{subfigure}
    \\
    \begin{subfigure}{0.95\columnwidth}
        \centering
        \includegraphics[width=\columnwidth]{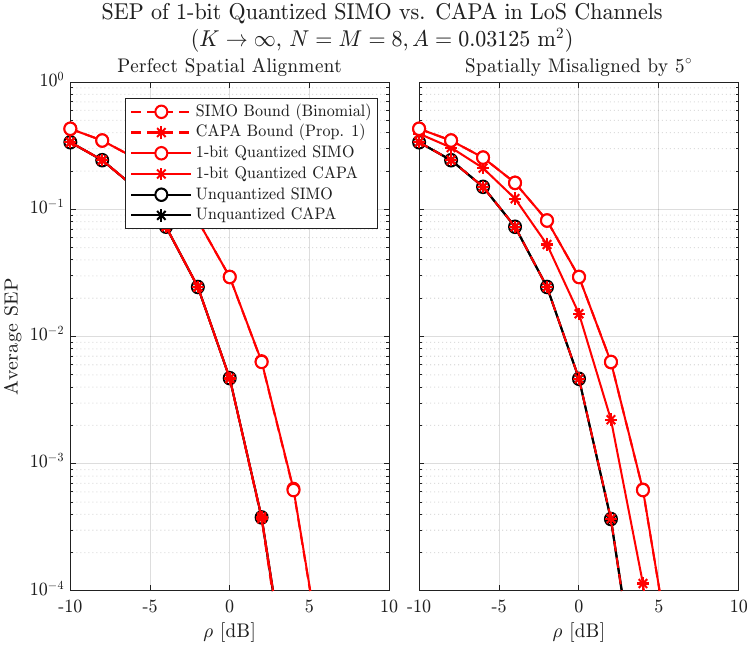}
        \vspace{-4ex}
        \caption{LoS channels.}
        \label{fig:1-bit_LoS_A1}
    \end{subfigure}
    \vspace{-1.5ex}
    \caption{Comparison of \ac{SEP} for \ac{CAPA} and \ac{SIMO} systems under various channel conditions.}
    \label{fig:1-bit_all}
\end{figure}

Most importantly, the 1-bit \ac{CAPA} curve tracks the unquantized limit exactly which confirms that the $2/\pi$ ADC penalty vanishes identically when the \ac{LoS} energy is confined to a single phase-aligned mode. 
The 1-bit \ac{SIMO} curve lags the unquantized reference by $8$--$10$~dB at $\mathrm{SEP} = 10^{-3}$, directly visualizing the ``twice the antennas'' penalty that \ac{CAPA} circumvents~\cite{ravinath2026sepjournal}.

The second subfigure in Fig.~\ref{fig:1-bit_LoS_A1} repeats the simulation with deliberate spatial misalignment. 
The \ac{SIMO} system is largely unaffected, since digital \ac{MRC} adapts to any \ac{LoS} phase profile. 
The 1-bit \ac{CAPA} performance degrades visibly: \ac{LoS} energy leaking into non-dominant modes contributes incoherently after quantization, partially restoring the \ac{ADC} penalty. 
The \ac{CAPA} curve falls between the \ac{AWGN} bound and the \ac{SIMO} curve, confirming that the conditions of Proposition~\ref{prop:Exact_LoS_Bounds} -- perfect spatial and phase alignment -- are necessary to achieve zero quantization loss.

\section*{Acknowledgment}

The authors would like to thank Prof. Christoph Studer for helpful discussions.



\bibliographystyle{IEEEtran}
\bibliography{references}

@STRING{IEEE_J_IT         = "{IEEE} Trans. Inf. Theory"}

@STRING{IEEE_J_WCOM       = "{IEEE} Trans. Wireless Commun."}

@STRING{IEEE_J_VT         = "{IEEE} Trans. Veh. Technol."}

@STRING{IEEE_M_COM        = "{IEEE} Commun. Mag."}

@STRING{IEEE_SPM        = "{IEEE} Signal Process. Mag."}

@ARTICLE{PeppasTVT2018,
  author={Peppas, Kostas P. and others},
  journal=IEEE_J_VT, 
  title={Approximations to the Distribution of the Sum of Generalized Normal RVs Using the Moments Matching Method and its Applications in Performance Analysis of Equal Gain Diversity Receivers}, 
  year={2018},
  volume={67},
  number={8},
  pages={7230-7241},
  doi={10.1109/TVT.2018.2830316}}

@ARTICLE{MaTIT2025,
  author={Ma, Yun and Wu, Yihong and Yang, Pengkun},
  journal=IEEE_J_IT, 
  title={On the Best Approximation by Finite Gaussian Mixtures}, 
  year={2025},
  volume={71},
  number={7},
  pages={5469-5492},
  doi={10.1109/TIT.2025.3558841}}

@ARTICLE{JacobssonTWC2017,
  author={Jacobsson, Sven and Durisi, Giuseppe and Coldrey, Mikael and Gustavsson, Ulf and Studer, Christoph},
  journal={IEEE Transactions on Wireless Communications}, 
  title={Throughput Analysis of Massive MIMO Uplink With Low-Resolution ADCs}, 
  year={2017},
  volume={16},
  number={6},
  pages={4038-4051},
  doi={10.1109/TWC.2017.2691318}}

@article{ravinath2026sepjournal,
  title={SEP Analysis of a Low-Resolution SIMO System with M-PSK over Fading Channels},
  author={Ravinath, Amila and Ding, Minhua and Gouda, Bikshapathi and Atzeni, Italo and T{\"o}lli, Antti},
  journal={arXiv preprint arXiv:2601.03387},
  year={2026}
}

@ARTICLE{PizzoJSAC2020,
  author={Pizzo, Andrea and Marzetta, Thomas L. and Sanguinetti, Luca},
  journal={IEEE Journal on Selected Areas in Communications}, 
  title={Spatially-Stationary Model for Holographic MIMO Small-Scale Fading}, 
  year={2020},
  volume={38},
  number={9},
  pages={1964-1979},
  doi={10.1109/JSAC.2020.3000877}}

@ARTICLE{AnnamalaiTCOM2000,
  author={Annamalai, A. and Tellambura, C. and Bhargava, V.K.},
  journal={IEEE Transactions on Communications}, 
  title={Equal-gain diversity receiver performance in wireless channels}, 
  year={2000},
  volume={48},
  number={10},
  pages={1732-1745},
  doi={10.1109/26.871398}}

@ARTICLE{HanTWC2023,
  author={Han, Zixiang and Shen, Shanpu and Zhang, Yujie and Tang, Shiwen and Chiu, Chi-Yuk and Murch, Ross},
  journal={IEEE Transactions on Wireless Communications}, 
  title={Using Loaded N-Port Structures to Achieve the Continuous-Space Electromagnetic Channel Capacity Bound}, 
  year={2023},
  volume={22},
  number={11},
  pages={7592-7605},
  doi={10.1109/TWC.2023.3253576}}

@ARTICLE{ZhaoTWC2025,
  author={Zhao, Boqun and Ouyang, Chongjun and Zhang, Xingqi and Shin, Hyundong and Liu, Yuanwei},
  journal={IEEE Transactions on Wireless Communications}, 
  title={Downlink and Uplink ISAC in Continuous-Aperture Array (CAPA) Systems}, 
  year={2025},
  volume={},
  number={},
  pages={1-1},
  doi={10.1109/TWC.2025.3605158}}

@ARTICLE{MoTSP2015,
  author={Mo, Jianhua and Heath, Robert W.},
  journal={IEEE Transactions on Signal Processing}, 
  title={Capacity Analysis of One-Bit Quantized MIMO Systems With Transmitter Channel State Information}, 
  year={2015},
  volume={63},
  number={20},
  pages={5498-5512},
  doi={10.1109/TSP.2015.2455527}}

@ARTICLE{OuyangTWC2025,
  author={Ouyang, Chongjun and others},
  journal={IEEE Transactions on Wireless Communications}, 
  title={Linear Receive Beamforming for CAPA Systems}, 
  year={2025},
  volume={},
  number={},
  pages={1-1},
  doi={10.1109/TWC.2025.3588626}}

@article{9906802,
	title        = {Wavenumber-Division Multiplexing in Line-of-Sight Holographic {MIMO} Communications},
	author       = {Sanguinetti, Luca and others},
	year         = 2023,
	month        = apr,
	journal      = IEEE_J_WCOM,
	volume       = 22,
	number       = 4,
	pages        = {2186--2201}
}

@article{dang2020should,
  title={What should 6{G} be?},
  author={Dang, Shuping and Amin, Osama and Shihada, Basem and Alouini, Mohamed-Slim},
  journal={Nature Electron.},
  volume={3},
  number={1},
  pages={20--29},
  year={2020},
  month = jan,
}

@article{larsson2014massive,
  title={Massive {MIMO} for next generation wireless systems},
  author={Larsson, Erik G and Edfors, Ove and Tufvesson, Fredrik and Marzetta, Thomas L},
  journal=IEEE_M_COM,
  volume={52},
  number={2},
  pages={186--195},
  year={2014}
}

@ARTICLE{10144733,
  author={Björnson, Emil and Eldar, Yonina C. and Larsson, Erik G. and Lozano, Angel and Poor, H. Vincent},
  journal=IEEE_SPM, 
  title={Twenty-Five Years of Signal Processing Advances for Multiantenna Communications: From theory to mainstream technology}, 
  year={2023},
  month=jun,
  volume={40},
  number={4},
  pages={107-117}
}

\end{document}